\documentclass[a4paper,11pt]{article}

\usepackage{fullpage}

\usepackage{color}
\usepackage{graphicx}
\usepackage{listings}

\usepackage{amsmath}
\usepackage{amsthm}

\usepackage{latexsym}
\usepackage{url}
\usepackage{xspace}

\usepackage{booktabs}
\usepackage{cite}

\usepackage{algorithm}
\usepackage{algpseudocode}

\usepackage{listings}
\usepackage{hyperref}

\newcommand{\ceil}[1]{\lceil #1\rceil}

\newcommand{\mpibarrier}{\texttt{MPI\_\-Barrier}\xspace}
\newcommand{\mpialltoall}{\texttt{MPI\_\-Alltoall}\xspace}
\newcommand{\mpialltoallv}{\texttt{MPI\_\-Alltoallv}\xspace}
\newcommand{\mpialltoallw}{\texttt{MPI\_\-Alltoallw}\xspace}
\newcommand{\mpidimscreate}{\texttt{MPI\_\-Dims\_\-create}\xspace}
\newcommand{\mpicommsplit}{\texttt{MPI\_\-Comm\_\-split}\xspace}

\newcommand{\mpicontig}{\texttt{MPI\_\-Type\_\-contiguous}\xspace}
\newcommand{\mpiresized}{\texttt{MPI\_\-Type\_\-create\_\-resized}\xspace}

\newcommand{\mpiint}{\texttt{MPI\_\-INT}\xspace}

\newcommand{\hydraopenmpi}{OpenMPI~4.1.6\xspace}
\newcommand{\hydraintelmpi}{IntelMPI~2021.13\xspace}
\newcommand{\hydrampich}{\texttt{mpich 4.1.2}\xspace}
\newcommand{\gcc}{\texttt{gcc 10.2.1}\xspace}

\newtheorem{theorem}{Theorem}

\title{Effective MPI: User-defined Datatypes and Cartesian Communicators for Zero-copy All-to-all Communication in Multidimensional Tori}

\author{Jesper Larsson Tr\"aff\\
  TU Wien\\
  Faculty of Informatics\\
  Institute of Computer Engineering, Research Group Parallel Computing 191-4\\
  Treitlstrasse 3, 5th Floor, 1040 Vienna, Austria}
\date{May 2026}

\begin{document}
\maketitle

\begin{abstract}
We present and show how to implement a non-trivial all-to-all
communication algorithm for arbitrary $d$-dimensional tori effectively
in MPI. Given a factorization of the number of processes $p$ into $d$
factors that can be mapped onto a $d$-dimensional torus, we first
utilize a Cartesian communicator to split a given $p$-process MPI
communicator into, for each MPI process, $d$ smaller communicators
spanning each of the dimensions of the torus to which the process
belongs, and cache these communicators in order to avoid expensive
splitting at each all-to-all operation. The all-to-all operation
itself is decomposed into a sequence of $d$ \mpialltoall operations on
the dimension-wise communicators. The non-trivial data rearrangement
before and after each \mpialltoall call is implicit only and effected
by MPI derived datatypes. This makes the implementation of the
algorithm formally \emph{zero-copy}, meaning that no explicit
process-local reordering of data blocks ever has to be performed. In
order to achieve this, the algorithm employs a double-buffering scheme
with modest temporary buffer requirements. By choosing the
factorization of $p$ and selecting appropriate implementations for the
component \mpialltoall operations, the presented implementation gives
ample opportunities for algorithm tuning and adaptation to the
particular high-performance system. A few, select experimental results
show competitive performance with native \mpialltoall implementations
and illustrate problems that common \mpialltoall implementations may
have.
\end{abstract}

\section{Introduction}

The all-to-all personalized communication problem, in which each
process in a set of processes has an individual block of data (of some
given number of elements) to each of the other processes, arises
naturally in many important applications, but is also the most
communication intensive and expensive collective data exchange
problem. The operation should therefore be used judiciously, while at
the same time efficient and well-implemented algorithms are needed.
All-to-all communication is standardized in MPI as the regular
\mpialltoall collective and the irregular \mpialltoallv and
\mpialltoallw collectives~\cite{MPI-4.1}.

We present a paradigm for potentially efficient implementations of the
all-to-all operation in terms of smaller component all-to-all
operations, which shows how such a collective operation can be
expressed in terms of more specialized uses of the same collective
operation. Such a decomposition is possible with MPI functionality and
highlights how MPI can be used effectively to efficiently
implement non-trivial algorithms. Such implementations of MPI
functionality in terms of other MPI functionality naturally poses
upper bounds on the concrete performance of, in the case here, the
\mpialltoall collective in any given MPI library as so-called
\emph{self-consistent performance guidelines}~\cite{Traff10:selfcons}. Such
performance guidelines can be automatically checked and strengthen
ones confidence in the performance and quality of a given MPI
library~\cite{Traff16:autoguide}. Essentially, a performance guideline
states that for a given collective operation, there is no immediate
better implementation in terms of MPI itself.

We can view the all-to-all algorithm that we present here in three different
ways.
\begin{enumerate}
\item
  As a dedicated algorithm for a torus communication network of
  any number of dimensions, in which communication is restricted to
  be between processes (or nodes of processes) that are neighbors along
  one of the dimensions in the torus. With $d$ dimensions, the algorithm
  takes $d$ communication rounds.
\item
  As an algorithm for the all-to-all problem for any given or desired
  factorization of the number of processes $p$. Viewed in this way,
  the algorithm can be adopted to the given communication network and
  partition by choice of factorization and choice of implementations
  for the all-to-all problems to be solved on the factors. With $d$
  factors, the algorithm takes $d$ communication rounds. For a power
  of two number or processes, $p=2^d$, this is $\log_2 p$, and the
  resulting algorithm can be seen as a natural generalization of
  well-known hypercube algorithms.
\item
  As message-combining performance guideline implementations of the
  all-to-all operation against which the MPI library native
  \mpialltoall function can be judged by comparative
  benchmarking~\cite{Traff16:autoguide}.
\end{enumerate}

All three viewpoints entail combining of blocks of elements from and
to different processes. In each round different combinations of blocks
are needed. We show how such composite blocks can be put together
implicitly, at the corresponding component \mpialltoall call, by
expressing the sequences of blocks as derived datatypes. The
communicators for the component \mpialltoall operations are created in
advance and cached so that the expensive creation of the needed
subcommunicators can be amortized over many all-to-all operations.
The paper has the didactic purpose of showing the effectiveness of MPI
mechanisms (Cartesian communicators, communicator splitting, attribute
caching and derived datatypes) for the implementation of non-toy
examples.

\section{Related Work}

Standard all-to-all algorithms for two- and three-dimensional meshes
and hypercubes are explained in~\cite{GramaKarypisKumarGupta03}, see
also~\cite{SaadSchultz89:hypercube}
and~\cite{BertsekasOzverenStamoulisTsengTsitsiklis91} for full-ported
hypercubes. Message combining all-to-all algorithms like
Bruck~\cite{Bruck97}, and for more general circulant
graphs~\cite{Traff25:reducescatter}, are well-known and used in many
different forms, also in MPI libraries. Message-combining can reduce
the number of communication rounds $d$ down to the optimal
$d=\ceil{\log_2 p}$, but at an inherent cost of (re)sending roughly
half the data blocks in each round~\cite{Bruck97}. An implementation
of the Bruck algorithm using MPI derived datatypes to reorganize and
pipeline data blocks was given in~\cite{Traff14:bruck}.

Multidimensional torus all-to-all algorithms with one-ported
communication were presented and implemented
in~\cite{TsengGupta96,TsengNiSheu99,SuhShin01}, and elsewhere, and use
links along only one dimension at a time.
The simple $d=2$ algorithm for the BlueGene/L system
in~\cite{YazakiTakaueAjimaShimizuIshihata12} attempts to use multiple
links simultaneously.  Two-level decompositions of the \mpialltoall
operation, also relying on MPI derived datatypes were given
in~\cite{Traff14:alltoall,Traff20:mpidecomp}; in the latter the
decomposition in particular follows the compute node structure of the
given cluster and performs the all-to-all communication as concurrent
intra-node all-to-all operations followed by concurrent inter-node
all-to-all operations (or the other way round). The algorithm of this
paper extends this implementation to arbitrary number of dimensions
and deeper processor hierarchies. A somewhat similar algorithm, for
two levels only, viewing the MPI all-to-all problem as a matrix
transposition problem, was given in~\cite{ChochiaSoltHursey22}.

Our implementation precomputes and uses information stored with
communicators to decide on the details of the algorithms to be
executed. This viewpoint was discussed in more detail specifically for
MPI in~\cite{Traff21:orthogonality}. Some problems related to
\mpidimscreate that also appear in this paper were discussed and
solved in~\cite{Traff15:dimscreate}, see
also~\cite{NiethammerRabenseifner19}.


\section{A $d$-dimensional Torus All-To-All Algorithm}

Assume that a $p$-process(or) communication structure (network,
system, artifact), say a $d$-dimensional torus, is given and can be
described as a Cartesian graph product $M^D_d=G_0\Box\allowbreak G_1\Box \ldots
\Box\allowbreak G_{d-1}$ with graph factors $G_i=(V_i,E_i)$ and
$p=\Pi_{i=0}^{d-1}D[i]=|V_0|\times |V_1|\times\ldots |V_{d-1}|$. When
$M^D_d$ is a torus with dimension orders $D[i],i=0,1,\ldots,d-1$, the
factors $G_i$ are ring communication structures, when $M^D_d$ is a
mesh, the factors are linear processor arrays; but the factors could
be (isomorphic to) other graphs (like fully connected networks) as
fitting to the actual hardware communication capabilities of the given
system, as long as they, for the algorithm to be described, admit
all-to-all communication.

Our $d$-dimensional torus all-to-all algorithm decomposes the
all-to-all operation into concurrent all-to-all operations on smaller
communication structures along the dimensions of the torus, each
isomorphic to a factor $G_i$ of the graph factorization.  On dimension
$i,0\leq i<d$, a smaller all-to-all operation on $D[i]$ processors
will exchange $p$ blocks of elements (the blocks to the processor
itself do not have to be communicated), and therefore the communication
volume complexity will be proportional to both $p$ and $d$ roughly as
$pd$ (with a constant factor smaller than one). The algorithm works
for any number of torus dimensions and for any factorization of $p$
into $d$ factors (in any chosen order). In particular, if $p=2^d$ is a
power of two, the resulting algorithm will be equivalent to well-known
hypercube algorithms for all-to-all communication.

\begin{algorithm}
  \caption{The generic all-to-all algorithm for processor $r\in M^D_d$
    in a $d$-dimensional Cartesian product $M^D_d=G_0\Box G_1\Box
    \ldots \Box G_{d-1}$ with $G_k=(V_k,E_k)$. Processor $r$ has $p$
    input blocks in $S[p]$, one for each processor, and receive a
    block from each processor in $R[p]$.}
  \label{alg:dgenericalltoall}
  \begin{algorithmic}
    \Procedure{AllToAll}{$S[p],R[p],r\in M^D_d$}
    \State $O\gets \Call{rank-to-vector}{r,d,D}$ \Comment Origin vector
    \State $b\gets 0$ \Comment Double buffer switch
    \State $T[b]\gets S$
    \For{$k=0,1,\ldots,d-1$} \Comment Dimension-wise all-to-all
    \State\Comment Define stride function $\sigma(i) = \Pi_{k'=0}^{i-1}D[k']$
    \State $S'_{[\sigma(k)][\sigma(k+1)]\ldots [\sigma(d-1)]}[D[k]][D[k+1]]\ldots [D[d-1]]\equiv T[b]$
    \State $R'_{[\sigma(k)][\sigma(k+1)]\ldots [\sigma(d-1)]}[D[k]][D[k+1]]\ldots [D[d-1]]\equiv T[1-b]$
    \State $G\equiv M^D_d[\{O[0]\}\times\cdots\times\{O[k-1]\}\times V_k\times\{O[k+1]\}\times \cdots \times \{O[d-1]\}]$
    \State $\Call{AllToAll}{S',R',G}$ \Comment All-to-all on subgraph $G$
    \State $b\gets 1-b$
    \EndFor
    \State $R\gets T[b]$
    \EndProcedure
  \end{algorithmic}
\end{algorithm}

The all-to-all algorithm is shown as
Algorithm~\ref{alg:dgenericalltoall} and relies heavily on a specific
notation for indexing in arrays.  The input and output arrays $S$ and
$R$ are declared as one-dimensional arrays of $p$ blocks (of
elements) in linear index order such that $S[i]$ denotes the $i$th
block of elements to be sent to processor $i$ and $R[i]$ the $i$th
block in $R$ which is to be received from processor $i$. The bracket
$[i]$ can be seen as a mapping which gives the index or address of the
$i$th element of the array to which it is applied. Multiple bracket
are used, C style~\cite{KernighanRitchie88}, for multidimensional
arrays. A declared array $T[D[d-1]]\ldots [D[0]]$ denotes a
$d$-dimensional array of order $p=D[0]\times D[1]\times\ldots\times
D[d-1]$.  Lookup with $d$ indices $T[i_{d-1}][i_{d-2}]\ldots[i_0]$
addresses element
$i=i_{d-1}\Pi_{k=0}^{d-2}D[k]+i_{d-2}\Pi_{k=0}^{d-3}D[k]+\ldots +i_0$,
corresponding to a row-major organization of the elements of $T$. Each
of the factors $\Pi_{k=0}^{d-2}D[k]$ etc.\ is the \emph{stride} used
for the given dimension (here $d-1$). Explicit strides can be used to
express other traversals of the array. In our notation
$T_{[s_{d-1}][s_{d-2}]\ldots [s_0]}[i_{d-1}][i_{d-2}]\ldots [i_0]$
will denote the element $i=i_{d-1}s_{d-1}+i_{d-2}s_{d-2}+\ldots
+i_0s_0$ with explicit strides $s_i$.  For instance, taking
$s_i=\Pi_{j=i}^{d-2}D[j]$, the $d$-dimensional array is handled in
column-major order. See the examples at the end of this section.

The all-to-all algorithm takes $d$ successive all-to-all communication
rounds, in each of which multiple, concurrent all-to-all operations
are performed.  The processors use two intermediate buffers $T[0]$ and
$T[1]$ that are divided into $p$ blocks in the same way as the result
buffer $R[p]$; but indexed differently by each of the all-to-all
operations. Let processor $r\in M^D_d$ have coordinates
$(O[0],O[1],\ldots,O[d-1])$. In communication round $k,
k=0,1,\ldots,d-1$, processor $r$ shall do an all-to-all exchange with
the processors
\begin{displaymath}
  \{O[0]\}\times\cdots\times\{O[k-1]\}\times V_k\times\{O[k+1]\}\times \cdots \times \{O[d-1]\}
\end{displaymath}
which are the processors of the communication structures $G$ that are
used for the all-to-all calls. For each processor, $G$ is the subgraph
of $M^D_d$ restricted to these processors.  The all-to-all operation on
the corresponding communication structures will ensure that processor
$r$ has received all blocks from the processors in its $G$ to itself
\emph{and} to all processors with which $r$ will do all-to-all
operations in the following rounds. All these blocks are combined into
a single composite block for processor $r$.  The communication
structures for round $k$ each obviously have size $D[k]$. The number
of blocks that must be sent and received (the blocks that a processor
communicates with itself not included) in round $k$ is
$(D[k]-1)\frac{p}{D[k]}$, since the number of blocks per processor is
$p/D[k]$. In the first round with $k=0$, processor $r$ has one block
to each of the other processors with which it communicates and to the
processors with which these will communicate in later rounds. After
the round, processor $r$ has received $D[0]$ blocks to itself (from
$D[0]-1$ other processors and itself), and also all the segments of
blocks to the processors with which $r$ will later communicate now
consist of $D[0]$ blocks. In each round, the segments of blocks that
are exchanged in the all-to-all communication get larger by a factor
of $D[k]$ blocks, such that after the last round, each processor will
have received $\Pi_{k=0}^{d-1}D[k]=p$ blocks, including the block from
itself. After each round, the blocks to processor $r$ and the
processors with which $r$ will exchange blocks in the later rounds
will form consecutive segments of the $T[0]$ and $T[1]$ buffers with
each such segment consisting of $\Pi_{i=0}^{k-1}D[i]$ blocks.  The
segments will be put together by traversing the buffers as
$(d-1-k)$-dimensional arrays in column-major order. For round $k$,
this is expressed as
\begin{displaymath}
  \underbrace{S'_{[\sigma(k)][\sigma(k+1)]\ldots [\sigma(d-1)]}[D[k]]}_{\textrm{Composite blocks}}\underbrace{[D[k+1]]\ldots [D[d-1]]}_{\textrm{Segment indices}}\underbrace{\ldots}_{\textrm{Consecutive $\Pi_{i=0}^{k-q}D[i]$ block segment}}
\end{displaymath}
where $\sigma(k)=\Pi_{i=0}^{k-1} D[i]$ is the stride to be used for the
the corresponding index. The first index (from the left) is the number
of composite blocks and therefore equal to the size $D[k]$ (number of
processors) of $G$. The number of indexed segments is $\Pi_{i=k}^{d-1}D[i]$ and
the size of the segments $\Pi_{i=0}^{k-1}D[i] = \sigma(k)$. The second index
enumerates the segments for the next dimension, and are therefore strided
by $\sigma(k+1)$ blocks; and so on for the remaining indices up to $d-1-k$.
We note that the stride for the first (leftmost) index equals the
size of the consecutive segments for round $k$.

\begin{theorem}
  \label{thm:genericalltoall}
  On a $d$-dimensional torus $M^D_d$,
  Algorithm~\ref{alg:dgenericalltoall} correctly solves the all-to-all
  problem in $d$ successive rounds of concurrent all-to-all
  operations. There are $p/D[k]$ concurrent all-to-all operations in
  round $k,k=0,1,\ldots,d-1$, each of which involves $D[k]$
  processors.  The number of elements in the blocks that are sent and
  received in round $k$ is likewise $p/D[k]$. Per processor $D[k]-1$
  such blocks are sent and received, and one block is copied from send
  to receive buffer. The total number of blocks sent and received per
  processor is
  \begin{align*}
    \sum_{k=0}^{d-1}\frac{D[k]-1}{D[k]}p &=dp-\sum_{k=0}^{d-1}\frac{p}{D[k]}
    \quad .
  \end{align*}
\end{theorem}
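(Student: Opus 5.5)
We prove correctness by induction on the round $k$, tracking an explicit invariant on where each original block lies after each round. The counting claims then follow from the structure of the subgraphs $G$ and the index notation.

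\textbf{Invariant.} For a processor $r$ with origin vector $O$, write a block as $B(s,t)$, meaning the block originating at processor $s$ and destined for processor $t$. Both are written as coordinate vectors. After round $k-1$, i.e.\ at the start of round $k$, buffer $T[b]$ of processor $r$ contains exactly the blocks $B(s,t)$ satisfying
\begin{itemize}
\item[(a)] $s[i]=O[i]$ for $i\geq k$,
\item[(b)] $t[i]=O[i]$ for $i<k$.
\end{itemize}
The block $B(s,t)$ sits at block index
\begin{displaymath}
\sum_{i<k}s[i]\sigma(i)+\sum_{i\geq k}t[i]\sigma(i).
\end{displaymath}
In words, the low-order coordinates of the index are taken from the source and the high-order ones from the destination. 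There are $\sigma(k)\cdot p/\sigma(k)=p$ such blocks, so the buffer is exactly full.

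For $k=0$ the invariant says that $T[0]=S$ holds $B(r,t)$ at index $t$, which is true. For $k=d$ it says that $r$ holds $B(s,r)$ at index $s$, which is precisely $R$ in linear order. This gives correctness.

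\textbf{Inductive step.} This is the main obstacle: we have to check that the strided views $S'$ and $R'$ implement the required permutation. In $S'_{[\sigma(k)]\ldots[\sigma(d-1)]}[j][i_{k+1}]\ldots[i_{d-1}]$, the leftmost index $j$ selects the composite block for the member of $G$ with $k$-th coordinate $j$. That composite block consists of all blocks whose destination has $t[k]=j$, that is, segments of $\sigma(k)$ consecutive blocks indexed by the source coordinates $s[0..k-1]$, taken for every value of $t[k+1..d-1]$.

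Processor $r$ sends the composite block $j$ to the processor $r'$ with $O'[k]=j$ and all other coordinates equal to those of $r$. It receives into composite slot $R'[O'[k]]$ from that processor. On the receive side the leftmost index $j=s[k]$, with stride $\sigma(k)$, now stores the source coordinate $k$. So the received block $B(s,t)$ lands at
\begin{displaymath}
\sum_{i\le k}s[i]\sigma(i)+\sum_{i>k}t[i]\sigma(i).
\end{displaymath}
We also check the membership conditions for round $k+1$:
\begin{itemize}
\item the sender $r'$ had $s[i]=O'[i]=O[i]$ for $i>k$;
\item $t[i]=O[i]$ for $i<k$ held before, and $t[k]=O[k]$ now holds because the block was routed to $r$.
\end{itemize}
Hence the invariant holds with $k+1$. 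The only care needed is that the send and receive layouts are the same strided view, so segment order is preserved; this is checked by writing out the block-index formula above.

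\textbf{Counting.} The subgraphs $G$ are fibers of the product along dimension $k$. They partition the $p$ processors into $p/\prod_{i\neq k}$-sized classes; more precisely there are $\prod_{i\neq k}D[i]=p/D[k]$ of them, each of size $D[k]$. Each composite block contains $\sigma(k)\cdot\prod_{i>k}D[i]=p/D[k]$ original blocks. In this statement that is the count the theorem calls ``elements'', i.e.\ in units of blocks.

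Each processor sends $D[k]-1$ composite blocks to others and copies one to itself. Summing $(D[k]-1)p/D[k]$ over $k=0,\dots,d-1$ gives $dp-\sum_k p/D[k]$ by simple algebra.
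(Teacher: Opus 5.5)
Your proof is correct and follows the paper's argument. Your conditions (a) and (b) are exactly the paper's invariant: before round $k$, processor $r$ holds the blocks to $V_{d-1}\times\cdots\times V_k\times\{O[k-1]\}\times\cdots\times\{O[0]\}$ from $\{O[d-1]\}\times\cdots\times\{O[k]\}\times V_{k-1}\times\cdots\times V_0$. You prove it by the same induction over rounds, and your counting is the same fiber argument. Your one addition is the explicit block-index formula $\sum_{i<k}s[i]\sigma(i)+\sum_{i\geq k}t[i]\sigma(i)$. It verifies that the strided views $S'$ and $R'$ realize the required rearrangement and that $R$ ends up in rank order, which the paper's one-line inductive step leaves implicit.
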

\begin{proof}
  The complexity in number of all-to-all operations and number and sizes
  of blocks follows immediately from the description of the
  $d$ rounds of the algorithm.

  For correctness, Algorithm~\ref{alg:dgenericalltoall} maintains the
  following invariant: Before round $k$, processor
  $r=(O[0],\ldots,O[d-1])$ has collected a block to each of the
  processors in
  \begin{displaymath}
    V_{d-1}\times\ldots\times V_k\times\{O[k-1]\}\times\ldots\times\{O[0]\}
  \end{displaymath}
  from each of the processors in
  \begin{displaymath}
    \{O[d-1]\}\times\ldots\times\{O[k]\}\times V_{k-1}\times\ldots\times V_0
\quad .
  \end{displaymath}
  The invariant holds before the first iteration $k=0$, since each processor
  has in the input buffer $S[p]$ a block to each of the other processors.
  The invariant implies that after the last iteration, processor $r$ has
  received a block from each of the other processors.

  Assume that the invariant holds before iteration $k$. By the all-to-all
  operation, each processor sends the blocks that it has for all
  processors in $\{O[0]\}\times\ldots\{O[k-1]\}\times
  V_k\times\{O[k+1]\}\times\ldots\times\{O[d-1]\}$ and receives the
  blocks that the other processors have.
\end{proof}

Except for the copy operations before the first iteration of $S$ into
$T[0]$ and after the last iteration of $T[b]$ into $R$, there are no
explicit copy operations. All exchanges and reordering of elements is
handled implicitly by the traversal of the temporary buffers (as
$(d-1-k)$-dimensional arrays) and the all-to-all operations. The first
and last copies can easily be avoided in a real implementation by
using $R$ as one of the intermediate buffers and making sure that this
will be used as receive buffer in the last all-to-all call, and by taking
the block directly from the send buffer $S$ in the first
iteration. Doing this makes the implementation formally a
\emph{zero-copy} implementation.

\paragraph{Three examples:} Some examples of how the algorithm works,
in particular in what index order the temporary arrays are addressed
will surely be helpful.

Let $p=5\times 4 = 20=D[0]\times D[1]$ be a two-dimensional
factorization of $p$.  Algorithm~\ref{alg:dgenericalltoall} first
performs all-to-all operations on communication structures with
$D[0]=5$ processors, then on communication structures with $D[1]=4$
processors. The index sequences for the $S'$ and $R'$ arrays in each
round are shown in the table below. As can be seen, in each of the two
rounds, all $p$ indices $0,1,\ldots p-1$ are listed, meaning that $p$
blocks are sent and received for each processor $r=[O[0],O[1]]$,
except of course the blocks for processor $r$ itself. These are given
by $T[O[0]]$ for round $0$ and $T[O[1]]$ for round $1$. As can also be
seen, the blocks for the last round consist of consecutively indexed
elements.

\begin{center}
\begin{tabular}{lcl}
  & $k$ & \\
  \toprule
  & $0$ & $R'_{[1][5]}[5][4] = [0,1,2,3,4][0,5,10,15]$ \\
  \midrule
  $R'[0]$ & $=$ & $[0,5,10,15]$ \\
  $R'[1]$ & $=$ & $[1,6,11,16]$ \\
  $R'[2]$ & $=$ & $[2,7,12,17]$ \\
  $R'[3]$ & $=$ & $[3,8,13,18]$ \\
  $R'[4]$ & $=$ & $[4,9,14,19]$ \\
  \midrule
  & $1$ & $R'_{[5]}[4] = [0,5,10,15][0,1,2,3,4]$ \\
  \midrule
  $R'[0]$ & $=$ & $[0,1,2,3,4]$ \\
  $R'[1]$ & $=$ & $[5,6,7,8,9]$ \\
  $R'[2]$ & $=$ & $[10,11,12,13,14]$ \\
  $R'[3]$ & $=$ & $[15,16,17,28,19]$ \\
  \bottomrule
\end{tabular}
\end{center}

Now, let instead $p=2\times 3\times 4=24=D[0]\times D[1]\times D[2]$.
The index sequences for the $R'$ arrays in each round are shown in the
table below.  As can be seen, the blocks for the last round consist of
consecutively indexed elements.

\begin{center}
\begin{tabular}{lcl}
  & $k$ & \\
  \toprule
  & $0$ & $R'_{[1][2][6]}[2][3][4] = [0,1][0,2,4][0,6,12,18]$ \\
  \midrule
  $R'[0]$ & $=$ & $[0,6,12,18,2,8,14,20,4,10,16,22]$ \\
  $R'[1]$ & $=$ & $[1,7,13,19,3,9,15,21,5,11,17,23]$ \\
  \midrule
  & $1$ & $R'_{[2][6]}[3][4] = [0,2,4][0,6,12,18][0,1]$ \\
  \midrule
  $R'[0]$ & $=$ & $[0,1,6,7,12,13,18,19]$ \\
  $R'[1]$ & $=$ & $[2,3,8,9,14,15,20,21]$ \\
  $R'[2]$ & $=$ & $[4,5,10,11,16,17,22,23]$ \\
  \midrule
  & $2$ & $R'_{[6]}[4] = [0,6,12,18][0,1,2,3,4,5]$ \\
  \midrule
  $R'[0]$ & $=$ & $[0,1,2,3,4,5] $\\
  $R'[1]$ & $=$ & $[6,7,8,9,10,11] $\\
  $R'[2]$ & $=$ & $[12,23,14,15,16,17] $\\
  $R'[3]$ & $=$ & $[18,19,20,21,22,23] $\\
  \bottomrule
\end{tabular}
\end{center}

Finally, let $p=4\times 3\times 3\times 4=144=D[0]\times D[1]\times
D[2]\times D[3]$.  The index sequences for the $R'$ arrays in each
round are shown in the table below.  As can be seen, the blocks for
the last round consist of consecutively indexed elements.

\begin{center}
\begin{tabular}{lcl}
  & $k$ & \\
  \toprule
  & $0$ & $R'_{[1][4][12][36]}[4][3][3][4] = [0,1,2,3][0,4,8][0,12,24][0,36,72,108]$ \\
  \midrule
  $R'[0]$ & $=$ & $[0,36,72,108,12,\ldots,32,68,104,140] $\\
  $R'[1]$ & $=$ & $[1,37,73,109,13,\ldots,33,69,104,141] $\\
  $R'[2]$ & $=$ & $[2,38,74,110,14,\ldots,34,70,105,142] $\\
  $R'[3]$ & $=$ & $[3,39,75,111,15,\ldots,35,71,106,143] $\\
  \midrule
  & $1$ & $R'_{[4][12][36]}[3][3][4] = [0,4,8][0,12,24][0,36,72,108][0,1,2,3]$ \\
  \midrule
  $R'[0]$ & $=$ & $[0,1,2,3,36,37,38,39,\ldots,132,133,134,135] $\\
  $R'[1]$ & $=$ & $[4,5,6,7,40,41,42,43,\ldots,136,137,138,139] $\\
  $R'[2]$ & $=$ & $[8,9,10,11,44,45,46,47,\ldots,140,141,142,143] $\\
  \midrule
  & $2$ & $R'_{[12][36]}[3][4] = [0,12,24][0,36,72,108][0,1,2,3,4,5,6,7,8,9,10,11]$ \\
 \midrule
  $R'[0]$ & $=$ & $[0,1,2,3,4,5,6,7,8,9,10,11,36,\ldots,117,118,119] $\\
  $R'[1]$ & $=$ & $[12,13,14,15,16,17,18,19,20,21,22,23,48,\ldots,129,130,131] $\\
  $R'[2]$ & $=$ & $[24,25,26,27,28,29,30,31,32,33,34,35,60,\ldots,141,142,143] $\\
  \midrule
  & $3$ & $R'_{[36]}[4] = [0,36,72,108][0,1,2,3,4,5,6,7,8,\ldots,33,34,35]$ \\
  \midrule
  $R'[0]$ & $=$ & $[0,1,2,3,4,5,6,7,8,\ldots,33,34,35]$\\
  $R'[1]$ & $=$ & $[36,37,38,39,40,41,42,43,44,\ldots,69,70,71] $\\
  $R'[2]$ & $=$ & $[72,73,74,75,76,77,78,79,80,\ldots,105,106,107] $\\
  $R'[3]$ & $=$ & $[108,109,110,111,112,113,114,115,116,\ldots,141,142,143] $\\
  \bottomrule
\end{tabular}
\end{center}

\section{The Implementation in MPI}

For the implementation of Algorithm~\ref{alg:dgenericalltoall} in and
for MPI, three details have to be handled:
\begin{enumerate}
\item
  Creation of the communication structures $G$.
\item
  Traversal of the arrays of blocks of elements.
\item
  The component all-to-all communications.
\end{enumerate}
We give the concrete, full MPI code in C which handles the three
issues, showing how the non-trivial all-to-all algorithm can be
effectively implemented with MPI.

\begin{lstlisting}[caption={Full MPI code for (graph) factorizing a
      communicator by communicator splitting following a factorization
      of the number of processes given as an array of dimension
      orders.},label=lst:mpifactor]
int Comm_factorize(MPI_Comm comm,
                   const int d, const int order[],
                   MPI_Comm *factorcomm)
{
  int rank, size;
  int i, p;

  torusattr *factors;

  p = 1;
  for (i=0; i<d; i++) {
    assert(order[i]>1);
    p *= order[i];
  }

  int periodicity[d];
  for (i=0; i<d; i++) periodicity[i] = 1;
  // make Cartesian communicator with no reorder
  MPI_Cart_create(comm,d,order,periodicity,0,factorcomm); 
  int result;
  MPI_Comm_compare(comm,*factorcomm,&result);
  assert(result==MPI_CONGRUENT);
  
  MPI_Comm_rank(*factorcomm,&rank);
  MPI_Comm_size(*factorcomm,&size);
  assert(size==p);
  
  factors = (torusattr*)malloc(sizeof(torusattr));
  assert(factors!=NULL);
  factors->d = d;
  factors->comm = (MPI_Comm*)malloc(d*sizeof(MPI_Comm));
  assert(factors->comm!=NULL);

  int origin[d];
  MPI_Cart_coords(*factorcomm,rank,d,origin);
  
  for (i=0; i<d; i++) {
    int f;
    p /= order[i]; f = rank-origin[i]*p; 
    MPI_Comm_split(*factorcomm,f,origin[i],&factors->comm[i]);
    int factorsize;
    MPI_Comm_size(factors->comm[i],&factorsize);
    assert(factorsize==order[i]);
  }  
  MPI_Comm_set_attr(*factorcomm,toruskey(),factors);

  return MPI_SUCCESS;
}
\end{lstlisting}

Given a communicator of size $p$ and a factorization of $p$ into $d$
factors, the code in Listing~\ref{lst:mpifactor} shows how to create
subcommunicators corresponding to the subgraphs $G$ of
Algorithm~\ref{alg:dgenericalltoall} to be used by the processes in
the later implementation of the all-to-all algorithm.  For each MPI
process, for all-to-all communication round $k, k=0,1,\ldots,d-1$, a
communicator consisting of the process itself and all other processes
having the same coordinates in the torus except for the $k$th
coordinate (which is between $0$ and $\texttt{order}[k]$) is created
by \mpicommsplit.  Processes that belong together are colored by the
first process which has a $k$th coordinate of zero, and the call will
compute all $p/\mathtt{order}[k]$ communicators (subgraphs) for round
$k$. To associate $d$-dimensional coordinates with the processes, a
Cartesian MPI communicator is created for convenience.

The creation of the communicators which entails $d$ \mpicommsplit
calls is expensive. For each process, the $d$ communicators to which
this process belongs are therefore cached with the calling communicator
for later lookup when the all-to-all operation is performed. For
caching, a communicator key value must be created; a possible way of
doing this transparently is shown in Listing~\ref{lst:mpicache}.

\begin{lstlisting}[caption={Surrounding code for caching subcommunicators
      with a communicator. A key value is kept as a static lookup
      value that is created at first use. A delete function (beware: with
      collective semantics) frees created subcommunicators for
      the process.},label=lst:mpicache]
typedef struct {
  int d;
  MPI_Comm *comm;
} torusattr;

static int torusdel(MPI_Comm comm,
                    int keyval, void *attr, void *s)
{
  torusattr *factors = (torusattr*)attr;

  int i;
  for (i=0; i<factors->d; i++)
    MPI_Comm_free(&factors->comm[i]);
  free(factors->comm);
  free(factors);
  
  return MPI_SUCCESS;
}

static int toruskey() {
  static int toruskeyval = MPI_KEYVAL_INVALID; // hidden key

  if (toruskeyval==MPI_KEYVAL_INVALID) {
    MPI_Comm_create_keyval(MPI_COMM_NULL_COPY_FN,
                           torusdel,&toruskeyval,NULL);
  }
  return toruskeyval;
}
\end{lstlisting}

\begin{lstlisting}[caption={MPI implementation of the $d$ factor all-to-all
      algorithm. A datatype for the sequence of blocks to be communicated is
      created for each communication round,
      and freed again.},label=lst:mpialltoall]
int Alltoall_torus(void *sendbuf,
                   int sendcount, MPI_Datatype sendtype,
                   void *recvbuf,
                   int recvcount, MPI_Datatype recvtype,
                   MPI_Comm comm)
{
  int rank, size;
  int i, k;
  int c, d;

  MPI_Comm_rank(comm,&rank);
  MPI_Comm_size(comm,&size);
  
  MPI_Datatype block;
  MPI_Aint lb, blockextent;
  MPI_Type_contiguous(recvcount,recvtype,&block);
  MPI_Type_get_extent(block,&lb,&blockextent);

  void *tempbuf;
  tempbuf = (void*)malloc(size*blockextent);
  assert(tempbuf!=NULL);
  
  int flag;
  torusattr *factors;
  MPI_Comm_get_attr(comm,toruskey(),&factors,&flag);
  if (!flag) // fallback
    return MPI_Alltoall(sendbuf,sendcount,sendtype,
                        recvbuf,recvcount,recvtype,comm);
  
  int count, stride[factors->d];

  count = 1;
  for (i=factors->d-1; i>=0; i--) {
    stride[i] = count;
    MPI_Comm_size(factors->comm[i],&c);
    count = c*count;
  }

  void *in, *out;
  out = sendbuf;
  in = (factors->d%2==0) ? tempbuf : recvbuf;
  
  count = 1;
  for (k=factors->d-1; k>=0; k--) {
    // create the contiguous segment
    MPI_Datatype segment, tile, tiled;
    MPI_Type_contiguous(count,block,&segment);

    // column-major parts
    for (i=0; i<k; i++) {
      MPI_Type_create_resized(segment,0,stride[i]*blockextent,
                             &tile);
      MPI_Type_free(&segment);
      MPI_Comm_size(factors->comm[i],&c);
      MPI_Type_contiguous(c,tile,&segment);
      MPI_Type_free(&tile);
    }
    MPI_Type_create_resized(segment,0,count*blockextent,
                            &tiled);
    MPI_Type_free(&segment);
    MPI_Type_commit(&tiled);

    MPI_Alltoall(out,1,tiled,in,1,tiled,factors->comm[k]);
    MPI_Type_free(&tiled);

    if (out==sendbuf) {
      if (in==recvbuf) {
        out = recvbuf; in = tempbuf;
      } else {
        out = tempbuf; in = recvbuf;
      }
    } else {
      void *s;
      s = in; in = out; out = s;
    }

    MPI_Comm_size(factors->comm[k],&c);
    count = c*count;
  }

  MPI_Type_free(&block);
  free(tempbuf);

  return MPI_SUCCESS;
}
\end{lstlisting}

Finally, the implementation of the all-to-all algorithm in
Algorithm~\ref{alg:dgenericalltoall} is shown as
Listing~\ref{lst:mpialltoall}. The factorization with communicators to
be used by the calling process is looked up using the attribute key.
We here assume that the communicator has indeed been factorized, if not
\mpialltoall on the full communicator is performed as fallback.

The implementation goes through $d$ iterations in each of which an
\mpialltoall call is done. The send and receive buffers \texttt{in}
and \texttt{out} are traversed in the order described by the stride
function $\sigma(i)$.  These strides are precomputed into the
$\mathtt{stride}[d]$ array, and used to create the datatype for each
iteration. First, a single block of elements is described by the
\texttt{block} type. Input and output buffers are such blocks in
rank order, but the order in which these blocks are sent and received
is determined by the final \texttt{tiled} datatype.  Strides are set
by resizing component datatypes with \mpiresized, these strided
segments of blocks are listed in contiguous order by creating
contiguous, but tiled types with \mpicontig. As can be seen, the
resulting implementation is completely zero-copy in the sense that there
are no process-local explicit copying of data whatsoever. This is
accomplished by taking blocks directly from the \texttt{sendbuf} in the
first iteration and making sure that \texttt{recvbuf} is used as
receive buffer in the last iteration. A theoretical drawback of the
implementation is that $O(d)$ time is spent creating datatypes in each
iteration before the \mpialltoall call. With the static, derived
datatype mechanism of MPI, this cannot be avoided, but in a more
flexible programming model, it possibly could.

\section{Selected Benchmark Results}

We have implemented the generic all-to-all algorithm exactly as
explained and shown in the previous sections. We aim to compare it
against itself with different factorizations of the number of
processes $p$ into $d$ factors, and of course against the MPI library
native \mpialltoall operation for the same inputs. For the latter, we
examine three MPI libraries that are available to us.

Our test system is a small $36\times 32$ processor cluster with
$36$~dual socket compute nodes, each with two Intel(R) Xeon(R) Gold
6130F $16$-core CPUs. The nodes are interconnected via dual Intel
OmniPath interconnects each with a bandwidth of $100$ GBytes/s. The
implementations and benchmarks were compiled with \gcc with the
\texttt{-O3} option.  The MPI libraries used are \hydraopenmpi,
\hydrampich, and \hydraintelmpi.

Our benchmark does a simple \mpialltoall call with preallocated and
filled send and receive communication buffers in deciles that are
increased by factors of $10$ from $1$ to $10\, 000$ elements per MPI
process. Elements are integers of MPI datatype \mpiint. For each
number of elements, $40$ measurement repetitions are done with $8$
prior warmup measurements. The best seen completion times (of the
slowest process) are plotted. Processes are synchronized before each
measurement with a double \mpibarrier call.

We experiment with different number of dimensions from $2$ to
$\ceil{\log_2 p}$ and use \mpidimscreate to do the actual
factorization of the number of processes. We assume that
\mpidimscreate behaves according to the MPI
specification~\cite{Traff15:dimscreate} and returns a factorization
where the factors are as close to each other as possible. This is not
quite the case for the \hydraopenmpi library as can be seen in
Table~\ref{tab:factorization}. This problem was addressed
in~\cite{Traff15:dimscreate} and can be solved. 

\begin{table}
  \caption{The factorizations of the given number of processes $p$ for
    different number of dimensions $d$ as used in the experiments. The
    factorization is computed (presumably) in a portable way by
    \mpidimscreate. As can be seen, the \hydraopenmpi library computes
    a factorization that is not in accordance with the MPI
    specification.}
  \label{tab:factorization}
\begin{center}
\begin{tabular}{rcc}
  $d$ & \hydraintelmpi & \hydraopenmpi \\
  & \hydrampich & \\
  \toprule
  $2$ & $p=36\times 32$ & $p=48\times 24$\\
  $3$ & $p=12\times 12\times 8$ & $p=12\times 12\times 8$ \\
  $4$ & $p=8\times 6\times 6\times 4$ & $p=8\times 6\times 6\times 4$ \\
  $\ceil{\log_2 p}$ & $p=3\times 3\times 2\times 2\times 2\times 2\times 2\times 2\times 2$ &
  $p=3\times 3\times 2\times 2\times 2\times 2\times 2\times 2\times 2$ \\
  \bottomrule
\end{tabular}
\end{center}
\end{table}

\begin{figure}
  \begin{center}
    \includegraphics[width=\linewidth]{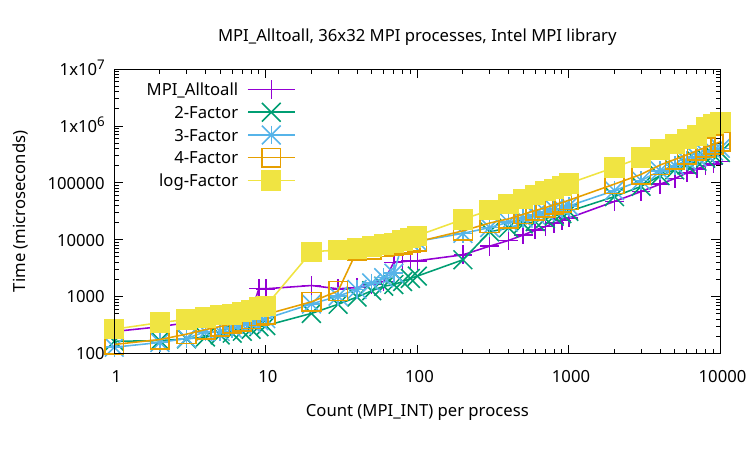}
  \end{center}
  \caption{Results for the \hydraintelmpi library.}
  \label{fig:alltoallintel}
\end{figure}

\begin{figure}
  \begin{center}
    \includegraphics[width=\linewidth]{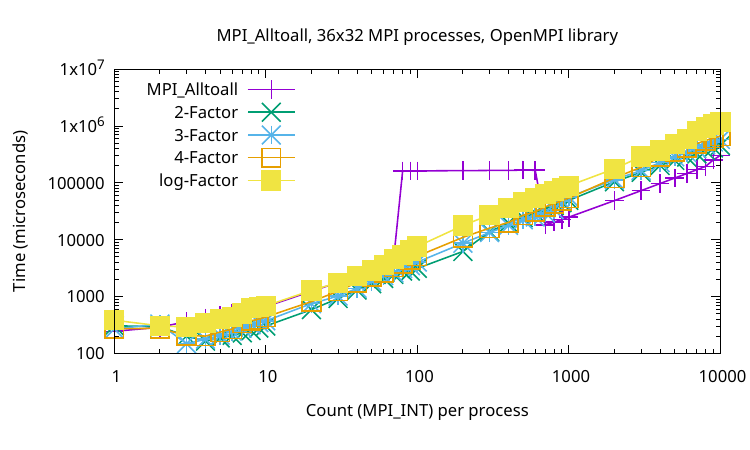}
  \end{center}
  \caption{Results for the \hydraopenmpi library.}
  \label{fig:alltoallopenmpi}
\end{figure}

\begin{figure}
  \begin{center}
    \includegraphics[width=\linewidth]{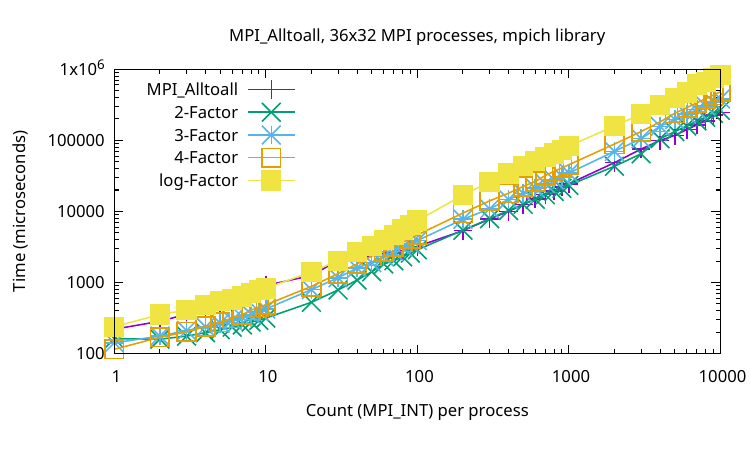}
  \end{center}
  \caption{Results for the \hydrampich library.}
  \label{fig:alltoallmpich}
\end{figure}

We run the new all-to-all implementation in a single full-system MPI
process configuration with $p=36\times 32$ and compare different
factorizations of $p$ against the library native \mpialltoall.  We
have tried with $d=2,3,4$ factors and $\ceil{\log_2 p}=9$ factors, see
Table~\ref{tab:factorization} for the specific factorizations.
Results with \hydraintelmpi are shown in
Figure~\ref{fig:alltoallintel}, with \hydraopenmpi in
Figure~\ref{fig:alltoallopenmpi}, and with \hydrampich in
Figure~\ref{fig:alltoallmpich}. The different MPI libraries lead to
similar conclusions, but have many specific and conspicuous
differences. First, with up to, say, $100$ \mpiint elements per
process, the $d=2,3$ factor algorithm seems to be considerably faster
than library native \mpialltoall, in many cases by a factor of two or
more. For larger number of elements, a (presumably) direct algorithm
is faster. The highest possible dimensional $d=\ceil{\log_2 p}$ round
algorithm is never competitive. The \hydraopenmpi library exhibits a
severe \mpialltoall performance (guideline violation) problem for
medium sized blocks from about $80$ to $800$ \mpiint elements, where
performance unexpectedly degrades by a factor of more than
$10$. Therefore, MPI libraries may have room for improvement by
message-combining implementations for smaller inputs.

\section{Summary}

We presented an extremely compact, highly non-trivial algorithm for
all-to-all communication using a factorization of the number of
processes to decompose the problem into smaller component all-to-all
problems. We showed how this algorithm could be effectively and
transparently be implemented in MPI by using derived datatypes,
subcommunicators and attribute caching on communicators. The resulting
algorithm could be used as performance guideline against which to
compare library native \mpialltoall implementations, and as a template
for possibly more efficient implementations in MPI libraries by
careful selection of the algorithms for the component \mpialltoall
calls and by choosing a factorization and splitting of the
communicators that map well to the given, physical communication
system.

\bibliographystyle{plain}
\bibliography{parallel,traff}

\end{document}